\newtheorem{theorem}{Theorem}
\newtheorem{lemma}[theorem]{Lemma}
\newtheorem{assumption}{Assumption}
\let\old@ps@headings\ps@headings
\let\old@ps@IEEEtitlepagestyle\ps@IEEEtitlepagestyle
\def\psccfooter#1{%
    \def\ps@headings{%
        \old@ps@headings%
        \def\@oddfoot{\strut\hfill#1\hfill\strut}%
        \def\@evenfoot{\strut\hfill#1\hfill\strut}%
    }%
    \def\ps@IEEEtitlepagestyle{%
        \old@ps@IEEEtitlepagestyle%
        \def\@oddfoot{\strut\hfill#1\hfill\strut}%
        \def\@evenfoot{\strut\hfill#1\hfill\strut}%
    }%
    \ps@headings%
}
        \parbox{\textwidth}{\hrulefill \\ \small{22nd Power Systems Computation Conference} \hfill \begin{minipage}{0.2\textwidth}\centering \vspace*{4pt} \includegraphics[scale=0.06]{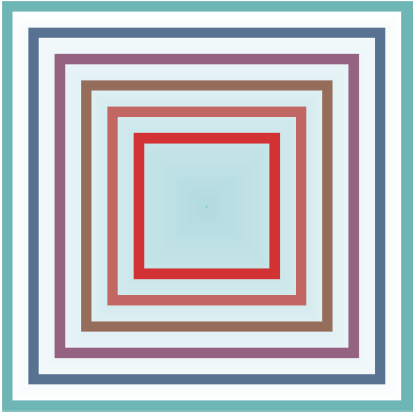}\\\small{PSCC 2022} \end{minipage} \hfill \small{Porto, Portugal --- June 27 -- July 1, 2022}}%
\newcommand{\cE}{{\cal E}}
\newcommand{\cG}{{\cal G}}
\newcommand{\cI}{{\cal I}}
\newcommand{\cM}{{\cal M}}
\newcommand{\cN}{{\cal N}}
\newcommand{\cV}{{\cal V}}
\newcommand{\bY}{{\bf Y}}
\newcommand{\dtoprule}{\specialrule{0.6pt}{0pt}{0.4pt}%
            \specialrule{0.6pt}{0pt}{\belowrulesep}%
            }
\newcommand{\dbottomrule}{\specialrule{0.6pt}{0pt}{0.4pt}%
            \specialrule{0.6pt}{0pt}{\belowrulesep}%
            }
\newcommand{\Cross}{\mathbin{\tikz [x=1.4ex,y=1.4ex,line width=.2ex] \draw (0,0) -- (1,1) (0,1) -- (1,0);}}%
\begin{document}
%
\title{Online Distribution System State Estimation\\via Stochastic Gradient Algorithm}

\author{
\IEEEauthorblockN{Jianqiao Huang}

\IEEEauthorblockA{Department of Electrical and Computer Engineering \\
Illinois Institute of Technology\\
Chicago, USA\\
jhuang54@hawk.iit.edu}
\and
\IEEEauthorblockN{Xinyang Zhou\\Bai Cui}
\IEEEauthorblockA{Power System Engineering Department\\
National
Renewable Energy Laboratory\\
Golden, USA\\
\{xinyang.zhou, bai.cui\}@nrel.gov}
}


\maketitle

\begin{abstract}
Distribution network operation is becoming more challenging because of the growing integration of intermittent and volatile distributed energy resources (DERs). This motivates the development of new distribution system state estimation (DSSE) paradigms that can operate at fast timescale based on real-time data stream of asynchronous measurements enabled by modern information and communications technology. To solve the real-time DSSE with asynchronous measurements effectively and accurately, this paper formulates a weighted least squares DSSE problem and proposes an online stochastic gradient algorithm to solve it. The performance of the proposed scheme is analytically guaranteed and is numerically corroborated with realistic data on IEEE-123 bus feeder.
\end{abstract}

\begin{IEEEkeywords}
Distribution system state estimation, online asynchronous update, stochastic gradient algorithm.
\end{IEEEkeywords}

\thanksto{\noindent J. Huang is with the Department of Electrical and Computer Engineering, Illinois Institute of Technology, Chicago, USA (email: jhuang54@hawk.iit.edu).\\
X. Zhou and B. Cui are with Power Systems Engineering Center, National Renewable Energy Laboratory, Golde, CO 80401, USA (email: \{xinyang.zhou, bai.cui\}@nrel.gov).\\
This work was authored in part by the National Renewable Energy Laboratory, operated by Alliance for Sustainable Energy, LLC, for the U.S. Department of Energy (DOE). The views expressed in the article do not necessarily represent the views of the DOE or the U.S. Government. The U.S. Government retains and the publisher, by accepting the article for publication, acknowledges that the U.S. Government retains a nonexclusive, paid-up, irrevocable, worldwide license to publish or reproduce the published form of this work, or allow others to do so, for U.S. Government purposes.}

\section{Introduction}
The operation of distribution system is becoming challenging with the growing integration of distributed energy resources (DERs). DERs are introducing more intermittency and volatility into distribution networks, resulting in fast changing states. To maintain normal operations in such scenarios, numerous advanced optimization and control algorithms have been designed for the distribution system. However, what has been overlooked is distribution system state estimation (DSSE), without which most of the advanced control strategies cannot be properly deployed in practice.


DSSE estimates the states of the distribution network by utilizing network parameters, topology and meter reading. Different from transmission system state estimation (TSSE), DSSE features untransposed multiphase lines of high $r/x$ ratios, unbalaned multiphase loads, much less accurate and asynchronous measurement devices, but more volatile system states due to high penetration of DERs \cite{primadianto2016review}. Therefore, the single phase equivalent model and conventional solution methods used for TSSE cannot produce the most efficient and accurate results for DSSE. 

There have been a lot of efforts devoted to DSSE. Among the existing work, the most widely used method is to formulate a weighted least squares (WLS) problem for DSSE. Based on the choice of state variables, the WLS-based DSSE methods can be categorized into nodal voltage based, branch current based and power injection based methods. A three-phase DSSE method was early proposed with bus voltage as state variables in \cite{Baran1994}, and developments have been made ever since. The authors in \cite{Lu1995} propose to convert nodal power injection measurements to the current injection measurements. To increase the computation efficiency and accuracy, a decomposition method and a load allocation approach is proposed in \cite{Deng2002} by exploiting the network topology. An augmented nodal analysis formulation is proposed in \cite{Francis2013} to increase numerical robustness. A semi-definite programming approach is proposed in \cite{Yao2019} to model virtual-measurements as equality constraints. The first branch-current based method is proposed in \cite{Baran1995} to reduce computation burden and condition number of the gain matrix. The voltage magnitude measurements are incorporated into the branch-current based approach in \cite{Baran2009}. This method is further enhanced by considering the synchrophasor measurements in \cite{Marco13}. A numerically robust and scalable power injection based method is proposed in \cite{zhou2020} to further increase computation efficiency. Dynamic system state estimation approaches are proposed in \cite{Zhao2017, Filho2009, Song2020} for handling fast time-varying system.

However, most existing works assume all the measurement data to be available in time, while in real-time DSSE, data from the heterogeneous and asynchronous measurement devices \cite{Zheng2013,von2014}, like PMU and smart meters, may have largely different sampling rates \cite{Angioni2013,Alimardani2015}. Furthermore, communication loss and delay, as well as bad data discarding, also lead to missing and asynchronicity of measurement data. The resulting DSSE problem is not observable, and the classic DSSE methods that assume synchronous data can not be applied. To solve this issue, authors in \cite{Ali2015} model load variation between two consecutive samples as a random variable following a normal distribution, and decrease measurement weights as time goes by. The authors in \cite{BS2014} assume synchronization errors to follow normal distribution with zero mean so that they can be modeled as measurement errors and states can be estimated via WLS method. In \cite{Guido2019}, outdated estimated states are used as measurements to ensure observability.

In this work, we do not make any assumption on the distribution of the load variation or synchronization error. We will formulate the DSSE as a time-varying weighted least squares (WLS) problem and propose to solve it via a stochastic gradient descent algorithm instead of Gauss-Newton (GN) method. The stochastic gradient algorithm can accommodate for asynchronous stream of measurement data and deliver accurate estimation efficiently. Stability of the proposed algorithm theoretically proved and numerically illustrated.

The remainder of this paper is organized as follows. In Section~\ref{system model}, we introduce the notation, distribution system modeling, and the formulation of time-varying WLS. We next propose the stochastic gradient descent (SGD) algorithm for solving the real-time DSSE problem and provide the convergence analysis in Section~\ref{SGD}. In Section~\ref{Simulation}, the proposed algorithm is numerically tested on IEEE-123 bus feeder. We conclude the paper in Section~\ref{conclusion}.

\section{Distribution System State Estimation}\label{system model}
\subsection{Notations}
In this paper, we use bold letters to represent matrices, e.g., $\mathbf{A}$, italic bold letters to represent vectors, e.g, $\mathbf{A}$ and $\bm{a}$, and non-bold letters to represent {scalars}, e.g., $A$ and $a$. For matrix $\mathbf{A}$, $\mathbf{A}^{\top}$, $\mathbf{A}^{-1}$ and $\tau (\mathbf{A})$ denote its transpose, inverse and smallest eigenvalue, respectively. $\mathbf{diag}(\bm{a})$ denotes a diagonal matrix with vector $\bm{a}$ as its diagonal. $\mathbb{E}[x]$ denotes the expected value of random variable $x$. $\mathfrak{i}:=\sqrt{-1}$ is used as the imaginary unit.

\subsection{Distribution system}
We consider a multiphase distribution system denoted by a graph $\cG=\{\cV_0, \cE\}$, where $\cV_0 = \{0\}\cup\cV=\{0\}\cup\{1, 2, ..., N\}$ with bus 0 being the slack bus and the set of $\cV$ collecting the rest buses, and $\cE = \{ (i, j) \subseteq \cV_0 \times \cV_0 \}$ denotes the set of lines between buses. Each bus has up to three available phases. We define each phase of a bus in $\cV$ as a node, and define the set of all the nodes as $\cN:= \{1, ..., n \}$ where $n$ is the total number of nodes in the system. Denote by $\cN^l \subseteq \cN$ the subset of all load nodes.

Similar to the buses, a line may have up to three available phases, each defined as a branch. A line with $l$ phases features $l$-by-$l$ admittance and impedance matrices. The admittance and impedance matrix for the line between bus $i$ and $j$ are denoted by $\bm{y}_{ij}$ and $\bm{z}_{ij}=\bm{r}_{ij}+\mathfrak{i}\bm{x}_{ij}$, respectively. Let $\cV_{i}$ denote the set of adjacent buses of bus~$i$ and $\cI_i$ the set of nodes within bus $i$. Then the bus admittance matrix $\bY$ can be characterized by its submatrix $\bY({\cI_i,\cI_j})$ corresponding to any $(i,j)\in\cE$ as:
\begin{align}
\bY({\cI_i,\cI_j})
=\begin{cases}
    \bm{0}, &\text{if } j \neq i, \text{and } j \not \in \cV_i, \nonumber\\
    -\bm{y}_{ij}, &\text{if } j \in  \cV_i, \nonumber\\ 
    \sum_{k \in \cV_i} \bm{y}_{ik}, & i=j. 
\end{cases}
\end{align}
Also see \cite{Bazrafshan18} for more details of multiphase distribution system modeling.


\subsection{Formulation of DSSE}
In DSSE, the measurement model for a distribution system is expressed as follows,
\begin{align}
    \bm{y}=\bm{h}(\bm{z})+\bm{e},
\end{align}
where $\bm{y} \in \mathbb{R}^{m}$, $\bm{z} \in  \mathbb{R}^{2n}$, $\bm{e}\in \mathbb{R}^{m}$ denote the vector of measurements, state variables. 

\subsubsection{Measurements}
For a distribution system, the measurements usually include virtual-measurements, real-time measurements and pseudo-measurements. The virtual-measurements are referred to the known knowledge of zero-injection nodes without any load or generation. The real-time measurements are sampled by a limited number of phasor measurement units and smart meters. Both virtual-measurements and real-time measurements are assumed to have very small standard deviation \cite{Monticelli85}. Meanwhile, most of the load nodes in the distribution system do not have real-time measurements. To have an observable real-time DSSE problem, pseudo-measurements for load nodes are estimated from historical load data with very large standard deviation \cite{Ghosh1997}. 

\subsubsection{System States}
System states are the set of parameters that can be used to determine all system parameters. In power system, the state variable $\bm{z}$ can be nodal complex voltage \cite{Baran1994}, branch complex current \cite{Baran1995} or nodal complex power injection \cite{zhou2020}. Accordingly, we have nodal-voltage based, branch-current based and power injection based DSSE \cite{Prima2017}. Both the nodal-voltage based and branch-current based DSSE can be formulated in polar and rectangular coordinate, while the power injection based DSSE is formulated in rectangular coordinate. The methods to solve the three categories of DSSE are as follow. Nodal-voltage-based DSSE directly applies the WLS method. Branch-current-based DSSE computes the equivalent current injection and updates residuals before applying WLS method, and updates nodal complex voltage through a forward sweep calculation after the WLS method in each iteration. See \cite{zhou2020} for the details of the algorithm of power injection based DSSE.

\subsubsection{Weighted Least Squares}
Weighted least squares (WLS) estimator is the most popular one among the existing DSSE methods. Based on the assumption that the measurement errors are independent random variables of normal distribution, WLS estimator can achieve the maximized joint probability distribution, or the weighted sum of squared residuals:
\begin{eqnarray}\label{eq:WLS}
    &\underset{\bm{z}}{\min}& \frac{1}{2}(\bm{y}-\bm{h}(\bm{z}))^{\top}\mathbf{W} (\bm{y}-\bm{h}(\bm{z})),
\end{eqnarray}
where $\mathbf{W}=\text{diag} \{1/\sigma_1^2,\ldots,1/\sigma_m^2\}\in\mathbb{R}^{m\times m}$ is a diagonal weight matrix with the diagonal elements being the reciprocal of the variance of the corresponding measurement errors, reflecting the accuracy of the associated meters. Therefore, the real-time measurements and virtual measurements are assigned with small variance, while the pseudo-measurements are assigned with large variance. We therefore have the following assumption on observability of \eqref{eq:WLS}.

\begin{assumption}\label{observable}
The distribution system is fully observable.
\end{assumption}
Full observability of the distribution system can be achieved by using the power injection pseudo-measurements and virtual measurements.

\subsection{Conventional Gauss-Newton Algorithm}
To solve for WLS problem~\eqref{eq:WLS}, iterative Gauss-Newton (G-N) algorithm is usually applied based on solving its first-order optimality condition:
\begin{align}
    -\mathbf{H}^{\top}\mathbf{W}\big(\bm{y}-\bm{h}(\bm{z})\big)=\bm{0}_{2n}, \label{FONC}
\end{align}
where $\mathbf{H} \in \mathbb{R}^{m \times 2n}$ denotes the Jacobian matrix of $\bm{h}(\bm{z})$ with respect to $\bm{z}$. Denote the state at the iteration $k$ by $\bm{z}(k)$ and the first-order Taylor series approximation of \eqref{FONC} around the state $\bm{z}(k)$ yields:
\begin{eqnarray}\label{GN algorithm}
    \bm{z} (k+1)=\bm{z} (k)+\mathbf{G}^{-1}\mathbf{H}^{\top} \mathbf{W} \Big(\bm{y}-\bm{h}\big(\bm{z} (k)\big)\Big)    
\end{eqnarray}
where $\mathbf{G}=\mathbf{H}^{\top}\mathbf{W}\mathbf{H}$
is defined as the gain matrix. $\bm{z} (k)$ is the solution vector at the $k$th iteration. \eqref{GN algorithm} is known as the iterative Gauss-Newton (G-N) algorithm. The algorithm is repeated until the stopping criterion is satisfied.

\subsection{Motivation for Real-Time DSSE Development}
We have so far discussed about the state estimation based on a single batch of the measurements, i.e., offline DSSE. In real-time operation, however, control center may receive temporal series of snapshots of measurements. To track the system states in real-time, the WLS problem \eqref{eq:WLS} is next reformulated for any time $t$ as:
\begin{eqnarray}\label{eq:real-time WLS}
    &\underset{\bm{z}_t}{\min}& \frac{1}{2}(\bm{y}_t-\bm{h}(\bm{z}_t))^{\top}\mathbf{W}_t (\bm{y}_t-\bm{h}(\bm{z}_t)),
\end{eqnarray}
where $\bm{y}_t \in \mathbb{R}^{m}$, $\bm{z}_t \in  \mathbb{R}^{2n}$, $\mathbf{W}_t\in \mathbb{R}^{m}$ denote the vector of measurements, state variables and weight matrix at time $t$.
Similarly, we can implement the G-N algorithm to the real-time WLS problem for each time $t$. 


However in practice, when faster estimation is desired, we may not be able to finish computing all iterations at time $t$ given limited computation capability. More importantly, we may not obtain the whole batch of measurement $\bm{y}_t$ due to asynchronous measurement and communication delays, rendering the G-N updates slow or even infeasible. To address this issue, we will propose a new real-time DSSE paradigm based on stochastic gradient algorithm.




\section{Solve Real-Time DSSE via Stochastic Gradient Descent}\label{SGD}

One way to significantly improve the timeliness of the DSSE solvers is to take advantage of the asynchronous measurement data, instead of waiting for the whole batch of measurement data to start computation. However, the asynchronous measurements from heterogeneous measurement devices pose a challenge to the G-N algorithm, because the number of real-time measurements at each interval is less than the number of state variables and thus the G-N algorithm cannot be properly executed. To solve this issue, we propose a real-time SGD algorithm, which does not require the whole batch of measurement data for computation at each iteration, and guarantees provable performance.

\subsection{Online Gradient Algorithm}
We first propose a gradient algorithm which runs sufficient iterations until convergence for each $t$ to completely solve \eqref{eq:real-time WLS}:
\begin{align}\label{eq: grad_LPF+Feedback}
    \bm{z}_t(k+1)=\bm{z}_{t}(k) - \epsilon \mathbf{H}^{\top} \mathbf{W}_t (\bm{h}(\bm{z}_{t}(k))-\bm{y}_t),
\end{align}
where $\epsilon$ is a constant stepsize and $\mathbf{H}^{\top} \mathbf{W}_t (\bm{h}(\bm{z}_{t}(k))-\bm{y}_t)$ is the gradient of \eqref{eq:real-time WLS} evaluated at $\bm{z}_{t}(k)$. Such gradient updates can be shown to converge to the solutions of \eqref{eq:real-time WLS} given small enough stepsize $\epsilon$ \cite{zhou2020}. 

In scenarios where we may only have time to run limited iterations from $t$ to $t+1$, e.g., 1 iteration for each $t$, we can hardly expect \eqref{eq: grad_LPF+Feedback} to converge in time. Therefore, we propose the following online gradient descent method for time slot $t$ with a constant stepsize $\epsilon$:
\begin{align}\label{eq: r-t grad_LPF+Feedback}
    \bm{z}_{t} = \bm{z}_{t-1} - \epsilon \mathbf{H}^{\top} \mathbf{W}_t (\bm{h}(\bm{z}_{t-1})-\bm{y}_{t}).
\end{align}
Note that at each time $t$, the algorithm is initialized at $\bm{z}_{t-1}$ and the gradient is calculated based on the current measurement $\bm{y}_{t}$. We summarize the online gradient-based DSSE (GD) algorithm in Algorithm~\ref{alg:real-time GD}. 

\begin{algorithm}[h]
	\caption{Online Gradient Algorithm for DSSE} 
	\begin{algorithmic}\label{alg:real-time GD}	
        \STATE Estimator runs in real time with a constant stepsize $\eta$:\\
		\REPEAT
        \STATE 1)
	    Receive a whole batch of measurements $\bm{y}_t$.\\
	    \STATE 2)
	    Update weight matrix $\mathbf{W}_t$ and evaluates measurement function $\bm{h}(\bm{z}_{t-1})$.
	    \STATE 3) Update system states with gradient step \eqref{eq: r-t grad_LPF+Feedback}.
        \STATE 4)
        Move on to the next time step $t\leftarrow t+1$.\\
        \UNTIL being ended.
	\end{algorithmic}
\end{algorithm}

\subsection{Stochastic Gradient Algorithm}\label{propose SGA}

The scenario used in Algorithm~\ref{alg:real-time GD} is rather ideal, namely, we have assumed that a whole batch of measurement data can be obtained for gradient update at each time $t$. In practice, however, asynchronous measurement and communication delay and loss prevent us from collecting the entire batch of measurement in time.
Assume that we only collect $m_t\leq m$ measurements at time $t$, rendering the standard gradient updates \eqref{eq: r-t grad_LPF+Feedback} incomplete. In order to avoid waiting for the entire batch of measurement to arrive and to provide timely updates of the system states, we propose to use stochastic gradient to update with available incomplete measurement data as follows:
\begin{align}\label{eq:z_update_LPF_SGD}
    \bm{z}_t=\bm{z}_{t-1}-\epsilon \mathbf{H}^s_t(\bm{z}_{t-1})^{\top} \mathbf{W}^s_t (\bm{h}^s_{t}(\bm{z}_{t-1})-\bm{y}_t),
\end{align}
where $\bm{y}_t \in\mathbb{R}^{m_t}$ is the available measurement vector at time $t$, $\mathbf{W}^s_t \in \mathbb{R}^{m_t \times m_t}$ is the submatrix of the weight matrix corresponding to the available measurement, 
$\bm{h}^s_t(\bm{z}_{t-1}):\mathbb{R}^{2n}\rightarrow \mathbb{R}^{m_t}$ and  $\mathbf{H}^s_t(\bm{z}_{t-1}) \in \mathbb{R}^{m_t \times 2n}$ denote the partial measurement functions and the measurement Jacobian matrix corresponding to the available measurements at time $t$ evaluated at $\bm{z}_{t-1}$. 

We execute dynamic~\eqref{eq:z_update_LPF_SGD} in real time to generate Algorithm~\ref{alg:r-t SGD}. The algorithm repeats the following two steps after the arrival of $m_t$ measurements. First, the estimator updates measurement Jacobian matrix, weight matrix and evaluates measurement residuals. Second, estimator updates state variables with one SGD step.


\begin{algorithm}[h]
	\caption{Online SGD Algorithm for DSSE} 
	\begin{algorithmic}\label{alg:r-t SGD}	
        \STATE Estimator runs in real-time with a constant stepsize $\eta$:\\
		\REPEAT		
        \STATE 1)
	    Receive a batch of measurements $\bm{y}_t$.\\
	    \STATE 2)
	    Updates Jacobian matrix $\mathbf{H}_t^s $, weight matrix $\mathbf{W}_t^s$ and evaluates measurement function $\bm{h}^s_t(\bm{z}_{t-1})$.
	    \STATE 3)
	     Update system states with stochastic gradient step \eqref{eq:z_update_LPF_SGD}
        \STATE 4)
        Move on to the next time step $t\leftarrow t+1$.\\
        \UNTIL being ended.
	\end{algorithmic}
\end{algorithm}


\subsection{Convergence Analysis}\label{convergence analysis}
In this subsection, we will show the convergence of SGD algorithm~\eqref{eq:z_update_LPF_SGD} with a fixed stepsize. Because the (stochastic) gradient used in \eqref{eq:z_update_LPF_SGD} corresponds to a nonconvex optimization problem \eqref{eq:real-time WLS} whose optimality and gradient dynamics are difficult to characterize, for convergence analysis we use the convex counterpart of \eqref{eq:real-time WLS} by replacing the nonlinear function $\bm{h}$ with its linearization $\mathbf{H}$ as follows:
\begin{eqnarray}\label{eq:real-time convex WLS}
    &\underset{\bm{z}_t}{\min}& \frac{1}{2}(\bm{y}_t-\mathbf{H} \cdot \bm{z}_t)^{\top}\mathbf{W}_t (\bm{y}_t-\mathbf{H} \cdot \bm{z}_t).
\end{eqnarray}
The gradient dynamics for solving \eqref{eq:real-time convex WLS} is written as:
\begin{align}\label{eq:z_update_LPF}
    \bm{z}_t=\bm{z}_{t-1}-\epsilon \mathbf{H}^{\top} \mathbf{W}_t (\mathbf{H}\cdot\bm{z}_{t-1}-\bm{y}_t),
\end{align}
which can be proved to converge asympototically to the solution of \eqref{eq:real-time convex WLS}.

For ease of expression, we use the notations defined in the following table in our proof.

\begin{table}[htbp]
\normalsize
\begin{center}
\caption{Nomenclature}
\begin{tabular}{l l}
\cline{1-2} 
\hline
$f(\bm{z}_t)$ & gradient of \eqref{eq:real-time convex WLS}: $\mathbf{H}^{\top} \mathbf{W}_t (\mathbf{H}\cdot\bm{z}_{t-1}-\bm{y}_t)$;\\
$\tilde{f}(\bm{z}_t)$ & gradient in \eqref{eq: r-t grad_LPF+Feedback}: $\mathbf{H}^{\top} \mathbf{W}_t (\bm{h}(\bm{z}_{t-1})-\bm{y}_{t})$;\\
$f_t(\bm{z}_t;\bm{\xi}_t)$ & stochastic gradient of \eqref{eq:real-time convex WLS}:\\ &$\mathbf{H}^s_t(\bm{z}_{t-1})^{\top} \mathbf{W}^s_t (\mathbf{H}^s_t \cdot\bm{z}_{t-1}-\bm{y}_t)$;\\
$\tilde{f}_t(\bm{z}_t;\bm{\xi}_t)$ & stochastic gradient in \eqref{eq:z_update_LPF_SGD}:\\ &$\mathbf{H}^s_t(\bm{z}_{t-1})^{\top} \mathbf{W}^s_t (\bm{h}^s_{t}(\bm{z}_{t-1})-\bm{y}_t)$;\\
$\bm{z}^*_t$ &optimal solution of \eqref{eq:real-time convex WLS};\\
$\tilde{\bm{z}}^*_t$ & optimal solution of \eqref{eq:real-time WLS}.\\
\hline
\end{tabular}
\label{NOMENCLATURE}
\end{center}
\end{table}





We proceed with the following reasonable assumptions for analytical characterization.

\begin{assumption}\label{A1}
The discrepancy between the gradient of \eqref{eq:real-time convex WLS} and the gradient in \eqref{eq: r-t grad_LPF+Feedback} is bounded for any feasible $\bm{z}$. Consequently, we have
    \begin{eqnarray}
        \|\tilde{f}(\bm{z};\xi)-f(\bm{z};\xi)\|^2\leq \Delta_1,\ \forall t.\label{eq:delta2}
    \end{eqnarray}
    where $f(\bm{z};\xi)$ and $\tilde{f}(\bm{z};\xi)$ denote the stochastic gradients of \eqref{eq:real-time convex WLS} and \eqref{eq:real-time WLS}, respectively. $\xi$ denotes the randomness introduced by the SGD algorithm.
\end{assumption}

\begin{assumption}\label{A2}
The expected value of the squared $L2$ norm for the stochastic gradient is bounded for any feasible $\bm{z}$. As a result, we have
    \begin{eqnarray}
        \mathbb{E}[\|f(\bm{z};\xi)\|^2]\leq  \sigma_f^2.\label{eq:sigma_g}
    \end{eqnarray}
\end{assumption}

\begin{assumption}\label{A3}
The difference of the optimal solutions of \eqref{eq:real-time convex WLS} between two consecutive time slots is bounded,
\begin{eqnarray}
    &\|\bm{z}_{t+1}^*-\bm{z}_{t}^*\| \leq \Delta_{\bm{z}}, \forall t.\label{eq:delta1}     
\end{eqnarray}
\end{assumption}

\begin{assumption}\label{A4}
The discrepancy between $\tilde{\bm{z}}_t^*$, the optimal solution of \eqref{eq:real-time WLS}, and $\bm{z}_t^*$, the optimal solution of \eqref{eq:real-time convex WLS}, is very small.
\end{assumption}

Under the Assumption~\ref{observable}, the gain matrix $\mathbf{H}^{\top} \mathbf{W}_t \mathbf{H}$ is positive definite given the complete measurements. As the gain matrix is the Hessian matrix of \eqref{eq:real-time convex WLS}, the optimization problem \eqref{eq:real-time convex WLS} is strongly convex. Denotes by $\tau_t$ the smallest eigenvalue of the gain matrix $\mathbf{H}^{\top} \mathbf{W}_t \mathbf{H}$ for notational simplicity, and we have the following lemma.
\begin{lemma} For any feasible $\bm{z}_t$ and $\bm{z}'_t$, one has:
\begin{align}\label{eq:B1}
        &\big(f(\bm{z}_t)-f(\bm{z}'_t)\big)^{\top}(\bm{z}_t-\bm{z}'_t)
        \geq \tau_t \|\bm{z}_t-\bm{z}'_t\|^2\geq \tau_1 \|\bm{z}_t-\bm{z}'_t\|^2,
\end{align}
      where denote by $\tau_1$ the lower bound of the smallest eigenvalue of any gain matrix.
\end{lemma}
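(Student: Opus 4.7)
The plan is to exploit the fact that $f$ is the gradient of the \emph{quadratic} objective in \eqref{eq:real-time convex WLS}, so $f$ is affine in its argument and the difference $f(\bm{z}_t)-f(\bm{z}'_t)$ collapses to a single linear expression in the gain matrix. Concretely, I would first treat the $\bm{z}_{t-1}$ in the nomenclature table as a placeholder for the argument, writing $f(\bm{z}) = \mathbf{H}^{\top}\mathbf{W}_t(\mathbf{H}\bm{z}-\bm{y}_t)$, which is justified since this is exactly the gradient of the convex objective in \eqref{eq:real-time convex WLS} evaluated at $\bm{z}$.

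Next, I would compute $f(\bm{z}_t)-f(\bm{z}'_t) = \mathbf{H}^{\top}\mathbf{W}_t\mathbf{H}(\bm{z}_t-\bm{z}'_t)$, since the $-\mathbf{H}^{\top}\mathbf{W}_t\bm{y}_t$ term is independent of $\bm{z}$ and cancels. Taking the inner product with $\bm{z}_t-\bm{z}'_t$ gives
\begin{equation*}
\bigl(f(\bm{z}_t)-f(\bm{z}'_t)\bigr)^{\top}(\bm{z}_t-\bm{z}'_t)
= (\bm{z}_t-\bm{z}'_t)^{\top}\mathbf{H}^{\top}\mathbf{W}_t\mathbf{H}(\bm{z}_t-\bm{z}'_t),
\end{equation*}
which is a quadratic form in the gain matrix $\mathbf{G}_t:=\mathbf{H}^{\top}\mathbf{W}_t\mathbf{H}$.

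Then I would invoke Assumption~\ref{observable}: full observability implies $\mathbf{G}_t$ is positive definite, and $\mathbf{W}_t\succ 0$ ensures it is symmetric. By the Rayleigh quotient inequality, $\bm{v}^{\top}\mathbf{G}_t\bm{v}\geq\tau_t\|\bm{v}\|^2$ for any $\bm{v}$, where $\tau_t$ is the smallest eigenvalue of $\mathbf{G}_t$. Applying this with $\bm{v}=\bm{z}_t-\bm{z}'_t$ yields the first inequality. The second inequality is immediate from the definition of $\tau_1$ as a uniform lower bound: $\tau_t\geq\tau_1$ for all $t$, so $\tau_t\|\bm{z}_t-\bm{z}'_t\|^2\geq\tau_1\|\bm{z}_t-\bm{z}'_t\|^2$.

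There is no real obstacle in this proof; it is essentially a one-line computation dressed up with standard linear algebra. The only thing requiring care is the mild notational inconsistency in the nomenclature table (where $f(\bm{z}_t)$ is written with $\bm{z}_{t-1}$ inside), which I would resolve by interpreting $f$ as the gradient map $\bm{z}\mapsto\mathbf{H}^{\top}\mathbf{W}_t(\mathbf{H}\bm{z}-\bm{y}_t)$ and noting that the conclusion is really a statement about strong monotonicity of this affine map with constant $\tau_1$, which is equivalent to strong convexity of the quadratic objective in \eqref{eq:real-time convex WLS} with modulus $\tau_1$.
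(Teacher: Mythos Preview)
Your proof is correct and follows the same reasoning the paper uses: the paper does not give a formal proof of this lemma but simply notes in the preceding paragraph that $\mathbf{H}^{\top}\mathbf{W}_t\mathbf{H}$ is the Hessian of \eqref{eq:real-time convex WLS} and is positive definite under Assumption~\ref{observable}, so the objective is strongly convex with modulus $\tau_t\geq\tau_1$. Your explicit computation of $f(\bm{z}_t)-f(\bm{z}'_t)=\mathbf{H}^{\top}\mathbf{W}_t\mathbf{H}(\bm{z}_t-\bm{z}'_t)$ followed by the Rayleigh-quotient bound is exactly the content of that remark, made rigorous.
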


\begin{lemma}[Unbiased Estimation \cite{2018Bottou}]\label{lem:unbiased estimate}
Stochastic gradient $f(\bm{z};\xi_t)$ is an unbiased estimate of the full gradient $f(\bm{z})$ given $\{\xi_1, \xi_2, ..., \xi_{t-1}\}$, i.e.,
\begin{eqnarray}\label{eq:ue}
\mathbb{E}[f(\bm{z};\xi_t)|\xi_1, ..., \xi_{t-1}]=f(\bm{z}).
\end{eqnarray}
\end{lemma}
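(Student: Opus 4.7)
The plan is to derive \eqref{eq:ue} by exhibiting the stochastic gradient as a sum of independently re-weighted per-measurement contributions whose conditional expectation collapses back to the full gradient. First, I would give a precise probabilistic model of the randomness $\xi_t$: identify it with the random subset $S_t\subseteq\{1,\ldots,m\}$ of measurement indices that arrive at time $t$; assume the indicator vector $\{\mathbf{1}(i\in S_t)\}_{i=1}^m$ is drawn at each $t$ from a fixed joint distribution, independently of $\{\xi_1,\ldots,\xi_{t-1}\}$; and write $p_i:=\Pr(i\in S_t)>0$.

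Second, I would decompose the full gradient of \eqref{eq:real-time convex WLS} row-by-row as $f(\bm{z})=\sum_{i=1}^{m} g_{t,i}(\bm{z})$, where $g_{t,i}(\bm{z})$ denotes the per-measurement contribution coming from the $i$-th row of the triple $(\mathbf{H},\mathbf{W}_t,\bm{y}_t)$, and rewrite the stochastic gradient with an inverse-probability scaling $1/p_i$ absorbed into the reduced weight $\mathbf{W}^s_t$, obtaining $f(\bm{z};\xi_t)=\sum_{i\in S_t}\tfrac{1}{p_i}g_{t,i}(\bm{z})$. Since $\bm{z}$ is $\sigma(\xi_1,\ldots,\xi_{t-1})$-measurable and $\xi_t$ is independent of this $\sigma$-algebra, linearity of conditional expectation yields
\begin{equation*}
\mathbb{E}[f(\bm{z};\xi_t)\mid\xi_1,\ldots,\xi_{t-1}]=\sum_{i=1}^{m}\mathbb{E}\!\left[\tfrac{\mathbf{1}(i\in S_t)}{p_i}\right]g_{t,i}(\bm{z})=\sum_{i=1}^{m}g_{t,i}(\bm{z})=f(\bm{z}),
\end{equation*}
which is exactly \eqref{eq:ue}.

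The main obstacle, and really the only nontrivial step, is committing to a specific sampling convention. The paper does not spell out how $\xi_t$ is distributed, and without either (a) adopting a uniform mini-batch model as in \cite{2018Bottou} or (b) imposing an additional assumption that the weights $\mathbf{W}^s_t$ in \eqref{eq:z_update_LPF_SGD} are implicitly inverse-probability re-weighted, the stochastic gradient is biased in general. Under either interpretation the remaining algebra is routine; the conceptual work lies entirely in calibrating the per-measurement scaling so that the surviving terms have the correct conditional mean. Since the lemma is quoted from \cite{2018Bottou}, I would default to interpretation (a) in the final write-up and explicitly flag the calibration as part of the probabilistic model of measurement arrivals.
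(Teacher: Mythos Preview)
The paper does not actually prove this lemma: it is stated with a citation to \cite{2018Bottou} and then used as a black box in the proof of Theorem~\ref{the:convergence}. So there is no ``paper's own proof'' to compare against; the authors treat \eqref{eq:ue} as an imported fact about generic SGD.

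Your proposal goes further than the paper does, and in doing so you have put your finger on a genuine gap in the paper's presentation rather than in your own argument. The paper never specifies the distribution of $\xi_t$, and as you correctly observe, the stochastic gradient defined in \eqref{eq:z_update_LPF_SGD}---which simply restricts $\mathbf{H}$, $\mathbf{W}_t$, and $\bm{h}$ to the rows indexed by the arriving measurements without any re-weighting---is \emph{not} unbiased for an arbitrary arrival distribution. Your inverse-probability re-weighting fix (interpretation (b)) is the standard Horvitz--Thompson device and is exactly what is needed to make \eqref{eq:ue} hold in the paper's asynchronous-measurement setting; interpretation (a), uniform mini-batching with the $m/m_t$ scaling from \cite{2018Bottou}, is a special case. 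Either way, your decomposition $f(\bm{z})=\sum_i g_{t,i}(\bm{z})$ together with independence of $S_t$ from the past and $\sigma(\xi_1,\ldots,\xi_{t-1})$-measurability of $\bm{z}$ is the right mechanism, and the computation you wrote is correct once the scaling is in place. In a write-up I would state the sampling assumption explicitly as you propose and flag that the paper's $\mathbf{W}_t^s$ must be read as already incorporating the $1/p_i$ factors for the lemma to apply.
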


Denote by $\eta$ the constant stepsize of the SGD algorithm, and we have the following theorem.

\begin{theorem}[Convergence]\label{the:convergence}
Under the Assumptions~\ref{observable}--\ref{A4}, if the constant stepsize satisfies $0<\eta \leq\frac{1}{2 \tau_1}$, then the proposed algorithm achieves:
\begin{eqnarray}\label{eq:convergence}
    \lim_{t \rightarrow \infty}\mathbb{E}[\|\bm{z}_{t}^*-\bm{z}_t\|^2]= \frac{\eta^2  \sigma_f^2+\eta^2 \Delta_1+\Delta_{\bm{z}}}{2\eta \tau_1}, 
\end{eqnarray}
where $\bm{z}_{t}^*$ and $\bm{z}_{t}$ denote the optimal solution of \eqref{eq:real-time convex WLS} and the one obtained via the proposed online SGD algorithm at time $t$. 
\end{theorem}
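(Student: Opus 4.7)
The plan is to derive a one-step recursion of the form
\begin{align*}
    \mathbb{E}\|\bm{z}_t - \bm{z}_t^*\|^2 \leq (1 - 2\eta\tau_1)\, \mathbb{E}\|\bm{z}_{t-1} - \bm{z}_{t-1}^*\|^2 + C,
\end{align*}
where $C := \eta^2 \sigma_f^2 + \eta^2 \Delta_1 + \Delta_{\bm{z}}$ lumps together the stochastic-gradient noise, the convex--nonlinear gradient discrepancy, and the drift between consecutive optima. The stepsize condition $\eta\leq 1/(2\tau_1)$ ensures a contraction factor $\rho := 1 - 2\eta\tau_1 \in [0,1)$, so iterating and passing to the limit yields the geometric fixed point $C/(2\eta\tau_1)$, which is exactly the right-hand side of \eqref{eq:convergence}.

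To produce this recursion, I would first expand the SGD update \eqref{eq:z_update_LPF_SGD} and take conditional expectation given the past randomness $\xi_1,\ldots,\xi_{t-1}$. Decomposing $\tilde{f}_t(\bm{z}_{t-1};\xi_t) = f_t(\bm{z}_{t-1};\xi_t) + [\tilde{f}_t(\bm{z}_{t-1};\xi_t) - f_t(\bm{z}_{t-1};\xi_t)]$, Lemma~\ref{lem:unbiased estimate} turns $\mathbb{E}[f_t(\bm{z}_{t-1};\xi_t)\mid\cdot]$ into $f_t(\bm{z}_{t-1})$; since $f_t(\bm{z}_t^*)=\bm{0}$, the strong-convexity lemma then delivers the dominant negative term $-2\eta\tau_1\|\bm{z}_{t-1} - \bm{z}_t^*\|^2$. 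Assumption~\ref{A2} bounds the principal part of the quadratic gradient-norm contribution by $\eta^2 \sigma_f^2$, and Assumption~\ref{A1}, combined with Cauchy--Schwarz and Young's inequality, absorbs the residual cross and quadratic pieces involving $\tilde{f}_t - f_t$ into $\eta^2\Delta_1$.

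Next, to swap the reference point from $\bm{z}_t^*$ to $\bm{z}_{t-1}^*$, I would add and subtract $\bm{z}_{t-1}^*$ and invoke Assumption~\ref{A3} to introduce the drift term $\Delta_{\bm{z}}$. Assumption~\ref{A4} then justifies treating the convex-surrogate recursion as an accurate proxy for the nonlinear DSSE, so the same bound effectively characterizes $\mathbb{E}\|\bm{z}_t - \tilde{\bm{z}}_t^*\|^2$. Finally, taking total expectation, iterating the contraction, and sending $t\to\infty$ collapses the geometric series to $C/(2\eta\tau_1)$, completing the argument.

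The main obstacle I anticipate is tuning the Young-inequality parameters so that the contraction coefficient stays exactly $1-2\eta\tau_1$ and the error constant reassembles into $\eta^2\sigma_f^2 + \eta^2\Delta_1 + \Delta_{\bm{z}}$ without spurious multiplicative factors. In particular, the conditional bias $\mathbb{E}[\tilde{f}_t - f_t \mid \cdot]$ is not controlled in sign by Assumption~\ref{A1}---only a pointwise squared norm is bounded---so its inner product with $\bm{z}_{t-1}-\bm{z}_t^*$ must be absorbed into the $-2\eta\tau_1$ term without degrading the rate; similarly, the $\Delta_{\bm{z}}$ drift enters through a cross term whose quadratic counterpart must remain compatible with the contraction so that no $\mathcal{O}(\Delta_{\bm{z}})$ perturbation of $\tau_1$ appears in the final bound.
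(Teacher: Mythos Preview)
Your proposal follows essentially the same route as the paper: expand the SGD step, split $\tilde f_t=f_t+(\tilde f_t-f_t)$, use Lemma~\ref{lem:unbiased estimate} to replace the stochastic gradient by $f_t$, invoke strong convexity \eqref{eq:B1} for the $-2\eta\tau_1$ contraction, bound the remaining pieces by $\eta^2\sigma_f^2$, $\eta^2\Delta_1$, $\Delta_{\bm z}$ via Assumptions~\ref{A1}--\ref{A3}, and iterate the resulting recursion.

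The one place you are more cautious than the paper is the cross terms. You plan to tune Young's inequality so the rate stays exactly $1-2\eta\tau_1$; the paper instead dispatches both cross terms in \eqref{derivation1:bound_FbSGDvsNoFbGD} by the crude splitting $\|a+b\|^2\leq\|a\|^2+\|b\|^2$ (called ``triangle inequality'' there), which drops the inner products with $\bm z_{t+1}^*-\bm z_t^*$ and with $\tilde f_t-f_t$ outright and directly yields the clean additive constant $\eta^2\sigma_f^2+\eta^2\Delta_1+\Delta_{\bm z}$ with no tuning needed. So your anticipated obstacle does not actually arise in the paper's derivation.
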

\begin{proof}
From the SGD algorithm, we have
\begin{align}\label{derivation1:bound_FbSGDvsNoFbGD}
    &\|\bm{z}_{t+1}-\bm{z}_{t+1}^*\|^2\nonumber\\
    =&\|\bm{z}_{t+1}-\bm{z}_{t}^*-(\bm{z}_{t+1}^*-\bm{z}_{t}^*)\|^2\nonumber\\
    \leq&\|\bm{z}_t-\eta \tilde{f}(\bm{z}_t;\xi_t)-\bm{z}_{t}^*-(\bm{z}_{t+1}^*-\bm{z}_{t}^*)\|^2\nonumber\\
    \leq& \|\bm{z}_t-\eta \tilde{f}(\bm{z}_t;\xi_t)-\bm{z}_{t}^*\|^2+\|\bm{z}_{t+1}^*-\bm{z}_{t}^*\|^2\nonumber\\
    =&\|\bm{z}_t-\bm{z}_{t}^*-\eta f(\bm{z}_t;\xi_t)+\eta f(\bm{z}_t;\xi_t)-\eta \tilde{f}(\bm{z}_t;\xi_t)\|^2\nonumber\\
    &+\|\bm{z}_{t+1}^*-\bm{z}_{t}^*\|^2\nonumber\\
    \leq& \|\bm{z}_t-\bm{z}_{t}^*-\eta f(\bm{z}_t;\xi_t)\|^2\nonumber\\
    +&\eta^2 \| f(\bm{z}_t;\xi_t)- \tilde{f}(\bm{z}_t;\xi_t)\|^2+\|\bm{z}_{t+1}^*-\bm{z}_{t}^*\|^2\nonumber\\
    =&\|\bm{z}_t-\bm{z}_{t}^*\|^2+\eta^2 \| f(\bm{z}_t;\xi_t)\|^2\nonumber\\
    &-2\eta (\bm{z}_t-\bm{z}_{t}^*)^{\top} f(z_t;\xi_t)\nonumber\\
    &+\eta^2 \| f(\bm{z}_t;\xi_t)- \tilde{f}(\bm{z}_t;\xi_t)\|^2+\|\bm{z}_{t+1}^*-\bm{z}_{t}^*\|^2
\end{align}
The first inequality comes from the nonexpansiveness of projection operation, the second and the third from the triangle inequality of the norm. Taking expectation on both sides of \eqref{derivation1:bound_FbSGDvsNoFbGD} gives:
\begin{align}\label{eq: exp-1}
     &\mathbb{E}[\|\bm{z}_{t+1} -\bm{z}_{t+1}^*\|^2]\nonumber\\
     \leq& \mathbb{E}[\|\bm{z}_t-\bm{z}_{t}^*\|^2]+\eta^2 \mathbb{E}[\| f(\bm{z}_t;\xi_t)\|^2]\nonumber\\
    &-2\eta \mathbb{E}[(\bm{z}_t-\bm{z}_{t}^*)^{\top} f(z_t;\xi_t)]\nonumber\\
    &+\eta^2 \mathbb{E}[\| f(\bm{z}_t;\xi_t)- \tilde{f}(\bm{z}_t;\xi_t)\|^2]+\|\bm{z}_{t+1}^*-\bm{z}_{t}^*\|^2\nonumber\\
    =& \mathbb{E}[\|\bm{z}_t-\bm{z}_{t}^*\|^2]+\eta^2 \sigma_f^2-2\eta \mathbb{E}[(\bm{z}_t-\bm{z}_{t}^*)^{\top} f(z_t))]\nonumber\\
    &+\eta^2 \Delta_1^2+ \Delta_{\bm{z}}
\end{align}
where the first equality comes from \eqref{eq:sigma_g}, \eqref{eq:delta2}, \eqref{eq:delta1}. The third term of RHS of the first equality can be expressed by the unbiased estimate as follows,
\begin{align}\label{eq:unbais estimate}
 & 2\eta \mathbb{E}[(\bm{z}_t-\bm{z}_{t}^*)^{\top} f(z_t,\xi_t)]\nonumber\\
 =& 2\eta \mathbb{E}[\mathbb{E}[(\bm{z}_t-\bm{z}_{t}^*)^{\top} f(z_t,\xi_t)|\xi_1, \xi_2, ..., \xi_{t-1}]]\nonumber\\
 =& 2\eta \mathbb{E}[(\bm{z}_t-\bm{z}_{t}^*)^{\top} \mathbb{E}[f(z_t,\xi_t)|\xi_1, \xi_2, ..., \xi_{t-1}]]\nonumber\\
 =& 2\eta \mathbb{E}[(\bm{z}_t-\bm{z}_{t}^*)^{\top} f(z_t)]
\end{align}
We substitute RHS of \eqref{eq:unbais estimate} for $2\eta \mathbb{E}[(\bm{z}_t-\bm{z}_{t}^*)^{\top} f(z_t)]$ into \eqref{eq: exp-1} to obtain,
\begin{align}
 &\mathbb{E}[\|\bm{z}_{t+1} -\bm{z}_{t+1}^*\|^2]\nonumber\\
=& \mathbb{E}[\|\bm{z}_t-\bm{z}_{t}^*\|^2]+\eta^2 \sigma_f^2-2\eta \mathbb{E}[(\bm{z}_t-\bm{z}_{t}^*)^{\top} f(z_t))]\nonumber\\
    &+\eta^2 \Delta_1^2+ \Delta_{\bm{z}}\nonumber\\
\leq & \mathbb{E}[\|\bm{z}_t-\bm{z}_{t}^*\|^2]+\eta^2 \sigma_f^2-2\eta \tau_1 \mathbb{E}[\|\bm{z}_t-\bm{z}_{t}^*\|^2]\nonumber\\
    &+\eta^2 \Delta_1^2+ \Delta_{\bm{z}}\nonumber\\
    =& (1-2\eta \tau_1) \mathbb{E}[\|\bm{z}_1-\bm{z}_1^*\|^2]
    +(\eta^2 \sigma_f^2
    +\eta^2 \Delta_1+\Delta_{\bm{z}})\nonumber\\
    \leq& (1-2\eta \tau_1)^t \mathbb{E}[\|\bm{z}_1-\bm{z}_1^*\|^2]\nonumber\\
    &+\frac{1-(1-2\eta \tau_1)^t}{2\eta \tau_1} (\eta^2 \sigma_f^2
    +\eta^2 \Delta_1+\Delta_{\bm{z}})
\end{align}
where the first inequality comes from \eqref{eq:B1} and the second inequality is obtained by repeating all the previous steps for $t$ times. 

For $0<\eta\leq \frac{1}{2\tau_1}$, one has $0 \le 1-2\eta \tau_1<1$, so \eqref{eq:convergence} follows.

\end{proof}

\subsection{Power Injection Based DSSE}\label{algorithm implementation}
The modern distribution system will features a lot of zero injection nodes. The power injection measurements of these nodes are accurate, so their weights will be infinite or set very large values for the nodal-voltage based DSSE method. It brings in numerical issue. To circumvent such a problem, the branch-current based DSSE \cite{Baran1995} and power injection based DSSE \cite{zhou2020} are proposed. Based on the numerical study, both approaches achieve similar estimation accuracy, but the latter approach is more efficient. Therefore, we adopt the formulation of the latter approach in this paper. 

We select nodal power injection $[\bm{p}^{\top}, \bm{q}^{\top}]^{\top}$ for all nodes as the state variables $\bm{z}$. Denote by $\hat{v}_{i,t}$ voltage magnitude measurement for any node $i$ collected in the set $\cM^v_t$ and by $(\hat{p}_{i,t}, \hat{q}_{i,t})$ load pseudo-measurement for any node $i$ collected in the load set $\cN^l$. Then problem~\eqref{eq:real-time WLS} is reformulated as:
\begin{subequations}\label{eq:set}
	\begin{eqnarray}
	&\hspace{-9mm}\underset{\bm{p}_t,\bm{q}_t}{\min}&\hspace{-5mm}\sum_{i\in\cN^l}\hspace{-2mm}\Big(\frac{\hspace{-0.1mm}(p_{i,t}-\hat{p}_{i,t})^2}{2\sigma_{p_i}^2}\hspace{-1mm}+\hspace{-1mm}\frac{(q_{i,t}-\hat{q}_{i,t})^2}{2\sigma_{q_i}^2}\hspace{-0.1mm}\Big)\hspace{-1mm}+\hspace{-3mm}\sum_{i\in\cM^v}\hspace{-2mm}\frac{(v_{i,t}-\hat{v}_{i,t})^2}{2\sigma_{v_i}^2},\label{eq:WLSt}\\
	&\hspace{-9mm}\text{s.t.}& \hspace{-2mm}\bm{v}_t=\bm{g}(\bm{p}_t,\bm{q}_t),\label{eq:voltaget}\\
	&\hspace{-9mm}&\hspace{-2mm} (\bm{p}_t,\bm{q}_t)\in\bm{\Omega}_t,\label{eq:Omegat}
	\end{eqnarray}
\end{subequations}

where $\bm{g}$ denotes the nonlinear power flow equations, and $\bm{\Omega}_t =\Cross_{i\in\cN} {\Omega}_{i,t}$ with $\Omega_{i,t}=\mathbb{R}^2, \forall i\in\cN^l$ for pseudo-measurement and $\Omega_{i,t}=\{0,0\},\forall i\in\cN\backslash\cN^l$ for virtual measurement. Virtual measurements are zero power injections of the nodes without any load. These measurements are accurate, so we assign state variables of the nodes without any load zero.

The proposed stochastic gradient algorithm is implemented as follows:
\begin{subequations}\label{eq:SGD}
\begin{align}
&p_{i,t+1}=p_{i,t}-\epsilon\Big(\sum_{j\in\cM^v_t}\frac{\partial v_j}{\partial p_i} \frac{(v_{j,t}-\hat{v}_{j,t})}{\sigma^2_{v_j}}+\frac{\big(p_{i,t}-\hat{p}_{i,t}\big)}{\sigma^2_{p_i}}\Big),\nonumber\\
& \qquad \qquad \qquad \qquad \qquad \qquad \qquad \qquad i\in\cN^l,\\
&q_{i,t+1}=q_{i,t}-\epsilon\Big(\sum_{j\in\cM^v_t}\frac{\partial v_j}{\partial p_i} \frac{(v_{j,t}-\hat{v}_{j,t})}{\sigma^2_{v_j}}+\frac{\big(q_{i,t}-\hat{q}_{i,t} \big)^2}{\sigma^2_{q_i}}\Big),\nonumber\\
& \qquad \qquad \qquad \qquad \qquad \qquad \qquad \qquad i\in\cN^l,\label{eq:multiq}\\
&(p_{i,t+1},q_{i,t+1}) \in \{0,0\},\qquad \qquad \qquad \forall i\in\cN\backslash\cN^l, \label{eq:multip_vir}\\
&\bm{v}_{t+1}=\bm{g}(\bm{p}_{t+1},\bm{q}_{t+1}),\label{eq:v update}
\end{align}
\end{subequations}
where $\cM^v_t\subseteq \cM^v$ denotes the subset of nodes whose real-time measurements of voltage magnitude are received by the control center at time $t$.
There is no explicit expression of function $\bm{g}$. Hence, we estimate the voltage magnitude in \eqref{eq:v update} with the power flow solution calculated by OpenDSS given the updated power injection for each $t$.

As a specific case of \eqref{eq:z_update_LPF_SGD}, dynamics~\eqref{eq:SGD} follows the same convergence property as described in Theorem~\ref{the:convergence}.

\section{Simulation Results}\label{Simulation}
\subsection{Simulation Setup}
The numerical performance of the proposed online SGD algorithm is compared with the online GD algorithm \eqref{eq: grad_LPF+Feedback}, denoted by GD, and \eqref{eq: r-t grad_LPF+Feedback}, denoted by GO, on a unbalance three-phase IEEE-123 bus feeder. We use the realistic load and solar irridiance data from feeders in Anatolia, California on a day of August in 2012. The data is of an one-second resolution\cite{Bank13} from 6 a.m. to 6 p.m., amounting to 43,200 successive scenarios; see Fig.~\ref{fig:Real-time PV+Load Data} for the real power injection of PV and load profile of one node. Measurements are obtained by adding Gaussian noise of normal distribution to the true power flow solution solved via OpenDss. We randomly place voltage meters on $12\%$ of all the nodes. The standard deviations of the $29$ voltage measurements are $0.01$ p.u. All the load nodes have pseudo-measurements with $50\%$ error \cite{Yao2019, Marco13}. At each second, only one voltage magnitude and three pairs of power injection measurements are sent to the estimator. All the  simulations are conducted on an HP ENVY NOTEBOOK with processor Intel®Core (TM) i7-6500CPU 2.5 GHz, RAM 8.0 GB and 64-bit operating system, running Matlab R2016a on Windows 10 Home Version.

\begin{figure}[htbp]
\centerline{\includegraphics[width=.5\textwidth]{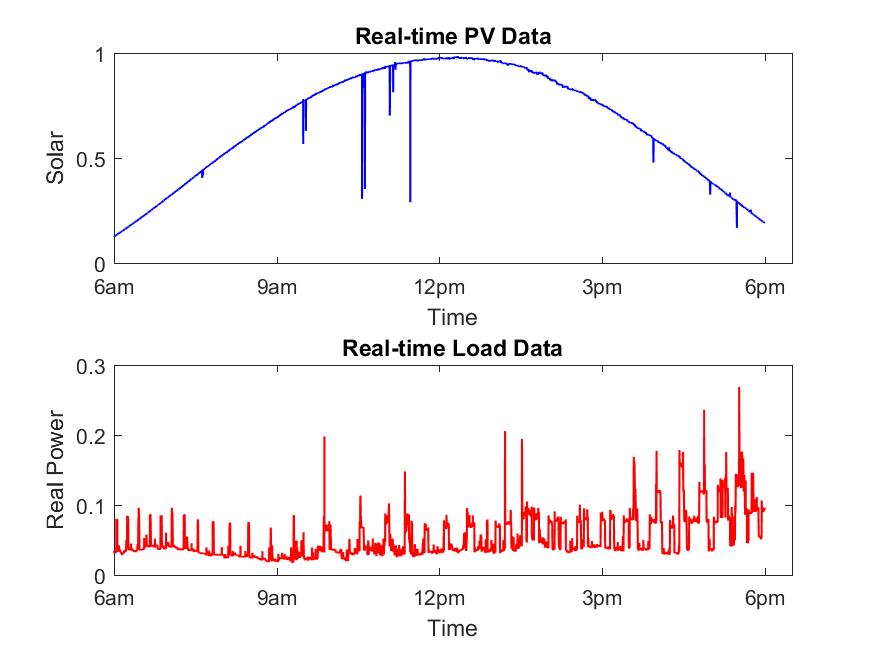}}
\caption{Real-time PV (upper) and load profile (lower) with 1 second temporal granularity.}
\label{fig:Real-time PV+Load Data}
\end{figure}

\subsection{Numerical Tests}
\subsubsection{Tracking the fast changing system states}
We present the true voltage magnitude of a randomly selected node and its estimation via the proposed SGD scheme and online GD algorithms at every second in Fig.~\ref{fig:Real-time Voltage Estimation}, where the blue curve represent the true voltage magnitude, while the yellow, purple and red dash curves represent the estimation voltages via GO, GD and SGD algorithms. In the plot, most of the blue curve is covered by the yellow curve, which implies that GO algorithm can track fast changing states very well. The accuracy is due to the complete access to the real-time measurements. The small mismatches between the blue and red curves validate that the SGD algorithm can track the voltage magnitudes accurately in real-time DSSE.

\begin{figure}[htbp]
\centerline{\includegraphics[width=.52\textwidth]{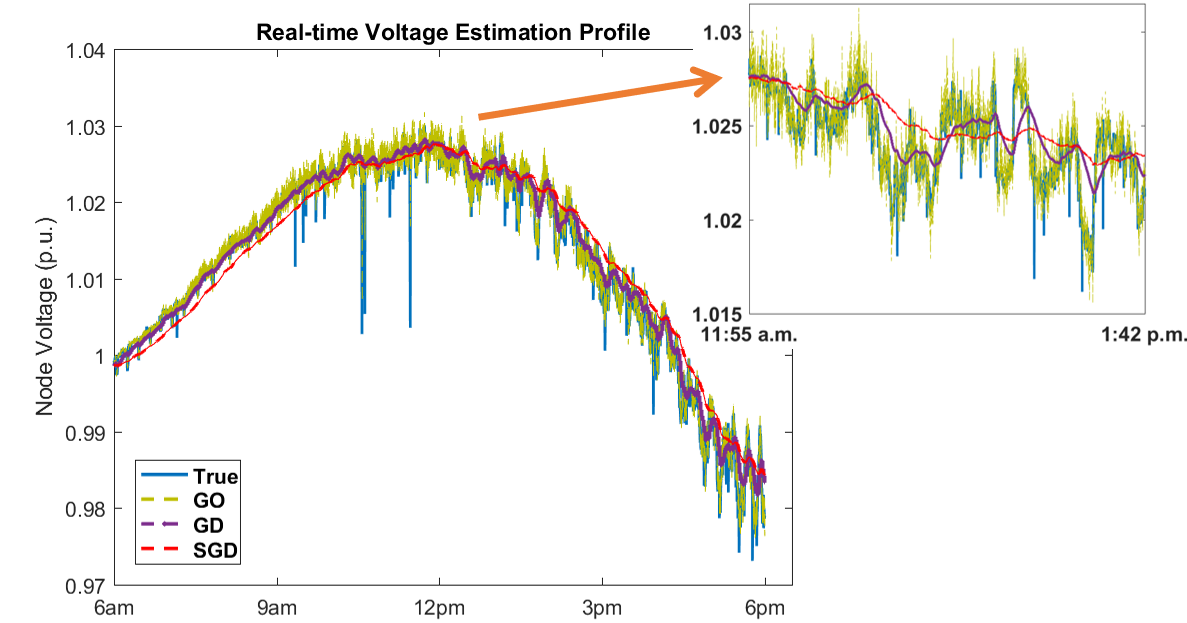}}
\caption{Comparison of the voltage estimation updated by the online stochastic gradient and gradient algorithms.}
\label{fig:Real-time Voltage Estimation}
\end{figure}

\subsubsection{Stability}\label{prove convergence}
The real-time average error of the voltage estimation is presented in Fig.~\ref{fig:running ave of Vm error}, where the blue, red and yellow lines represent the voltage estimation errors of SGD, GD and GO algorithms. The average error of voltage estimation for GO is less than that of SGD and GD. In a lot of cases, the averge error of voltage estimation of GD is smaller than SGD as it accesses to all the available real-time measurements at every second with the requirement of synchronizing all the measurement devices. However, the SGD does not need the requirement, and its average error is less than $1\%$ for most of the samples during the $12$-hour interval. 

\begin{figure}[htbp]
\centerline{\includegraphics[width=.5\textwidth]{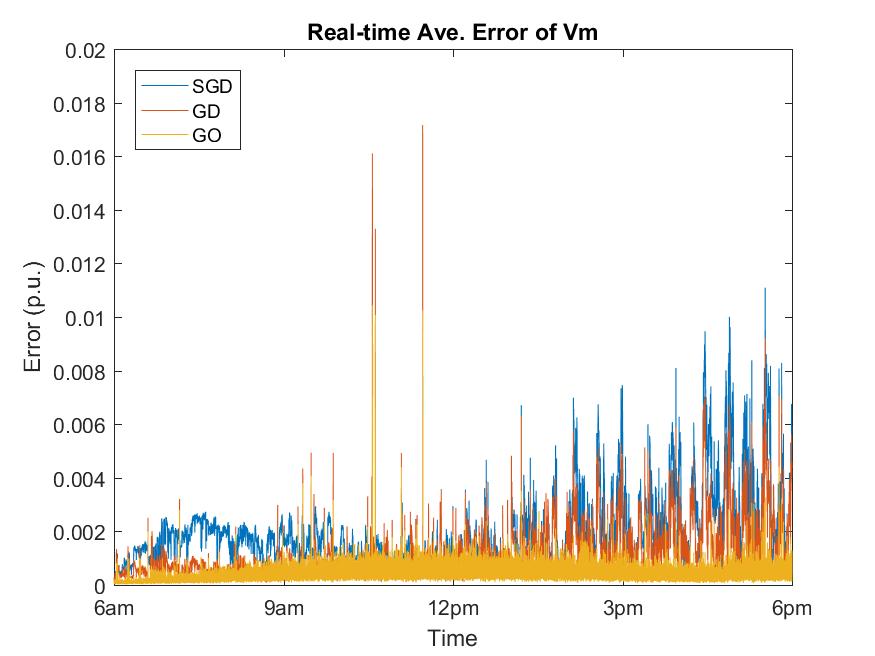}}
\caption{Real-time average error of the voltage estimation for SGD, GD and GO at every second.}
\label{fig:running ave of Vm error}
\end{figure}

\subsubsection{Accuracy and Computation Time}
In this subsection, we compare three aspects of the numerical results of SGD, GD and GO as follows: 1) the average computation time, 2) the average estimation error of the voltage magnitude per node, and 3) the average of the maximum estimation error of the voltage magnitude per sample. The comparison is presented in Table~\ref{Comparison btw SGD and GD}. The second row of the table presents the results of SGD. As we can see, the average error per node is $2 e^{-3}$ p.u., and the average maximal error per scenario is $5.3 e^{-3}$ p.u.. Both errors are less than $1\%$, and very close to the errors of GD and GO. The result shows that the proposed SGD-based DSSE algorithm achieves accurate voltage estimation. The average computation time of SGD per iteration is $1.8 e^{-3}$ seconds, much less than $1$ second. Therefore, the accuracy and efficiency of the proposed SGD-based algorithm enables DSSE to handle a system featured with fast changing states with the asynchronous measurements in real-time.

\begin{table}[!ht]
\begin{center}
\caption{Computation Time and Estimation Errors of voltage magnitudes.}
\begin{tabular}{c c c c}
\dtoprule
& {\textbf{Ave. Time (s)}}
& {\textbf{Ave. Error (p.u.)}} 
& {\textbf{Ave. Max. Error (p.u.)}} \\
\hline
\textbf{SGD}  & $1.80e-03$  & $3.0e-03$ & $6.3e-03$\\
\textbf{GD}  & $1.80e-03$  & $1.7e-03$ & $4.3e-03$\\
\textbf{GO}  & $0.17$  & $1.1e-03$ & $2.8e-03$\\
\dbottomrule
\end{tabular}
\label{Comparison btw SGD and GD}
\end{center}
\end{table}

\section{Conclusion}\label{conclusion}
We have proposed an SGD-based real-time DSSE algorithm to process data stream of asynchronous measurements. The convergence analysis of the proposed algorithm has been established. The algorithm has been tested in an unbalanced three-phase IEEE-123 bus system under realistic solar and load data to effectively and accurately track fast-changing system states with asynchronous measurements. 

\nocite{*} 
\bibliographystyle{IEEEtran}
\bibliography{pscc2022_fullpaper.bib}

\end{document}